\let\emph\relax 
\begin{document}
%
\title{Optimizing Power and User Association for Energy Saving in Load-Coupled Cooperative LTE}


\author{\IEEEauthorblockN{Lei You$^1$, Lei Lei$^1$, and Di Yuan$^{1,2}$} 
\IEEEauthorblockA{\small$^1$Department of Science and Technology,
Link\"{o}ping University, Sweden\\}
\IEEEauthorblockA{\small$^2$Institute for Systems Research, University of Maryland, College Park, MD, 20740, USA\\}
\small\texttt{\{lei.you, lei.lei, di.yuan\}@liu.se, diyuan@umd.edu}
}


%


\maketitle

\begin{abstract} 
We consider an energy minimization problem for cooperative LTE networks. 
To reduce energy consumption, we investigate how to jointly optimize the transmit power and the association between cells and user equipments (UEs), by taking into consideration joint transmission (JT), one of the coordinated multipoint (CoMP) techniques. 
We formulate the optimization problem mathematically. 
For solving the problem, a dynamic power allocation algorithm that adjusts the transmit power of all cells, and an algorithm for optimizing the cell-UE association, are proposed. 
The two algorithms are iteratively used in an algorithmic framework to enhance the energy performance.
Numerically, the proposed algorithms can lead to lower energy consumption than the optimal energy setting in the non-JT case. 
In comparison to fixed power allocation in JT, the proposed dynamic power allocation algorithm is able to significantly reduce the energy consumption.
\end{abstract}


%
\IEEEpeerreviewmaketitle

\theoremstyle{plain}
\newtheorem{definition}{Definition}
\newtheorem{theorem}{Theorem}
\newtheorem{lemma}{Lemma}
\newtheorem{proposition}{Proposition}
\newtheorem{postulation}{Postulation}
\newtheorem{property}{Property}
\newtheorem{corollary}{Corollary}

\section{Introduction}
Improving energy efficiency is one of the urgent tasks in the future cellular systems, due to both economic and environmental reasons \cite{Ge:2016}.
The numerous user equipments (UEs) with high data traffic demand and the mass deployment of base stations can lead to high energy consumption in cellular networks \cite{Cavalcante:2014jd}. 
Considering the existence of inter-cell interference, how to satisfy all UEs' data demand with low energy consumption is challenging. 

To deliver data demand for the associated UEs in a cell, an amount of time-frequency resource will be consumed for data transmission.
The transmit power of a cell can affect the usage of time-frequency resource in the cell. 
The proportion of the consumed time-frequency resource for transmission in a cell is defined as cell's load. 
To characterize the influence between inter-cell interference and cells' load, a so called ``load-coupling" model has been proposed in \cite{IViering:2009tq,Ge:2014,Siomina:eq,Fehske:2012iw}. 
By adopting this model, several energy minimization problems have been investigated in \cite{Ho:2015hw,Anonymous:SLcg4Bln}, where the authors proved that, given users' demand, the total energy consumption can be reduced by increasing the load of all cells. However, the result is limited to non-cooperative cellular systems. In other words, no UE in the cellular network can be served by multiple cells simultaneously. 
This cooperation technique is called joint transmission (JT). As one of the coordinated multipoint (CoMP) techiniques, JT is able to enhance the efficiency of the time-frequency resource usage, via exploiting the potential interference signal reuse for data transmission \cite{Porcello:2014wq}.

In this paper, with JT, we investigate the energy consumption in networks by optimizing power allocation and cell-UE association.
We present the following contributions. 
First, we formulate the considered energy consumption problem mathematically.
Second, we provide theoretical analysis for the problem solving.
We show that when JT is taken into consideration, the full load optimality conclusion in \cite{Ho:2015hw,Anonymous:SLcg4Bln} may not hold in general. 
Third, considering the high computational complexity in solving the optimization problem, we propose two algorithms, for optimizing the power allocation and the cell-UE association, respectively. The two proposed algorithms can be jointly used to improve the energy performance in polynomial time.
In the proposed algorithm, based on our theoretical analysis, we systematically scale down the transmit power and optimize the cell-UE association for JT. 
Finally,
we numerically illustrate that the proposed algorithm is capable of improving the performance of network energy consumption, compared with the optimal energy setting \cite{Ho:2015hw} in the non-JT case. 
Moreover, we show that the proposed power allocation algorithm outperforms the fixed transmit power schemes in JT case, on the performance of energy consumption.

\textit{Notations:} We denote a (tall) vector by a bold lower case letter, say $\bm{x}$. We denote $\bm{x}>\bm{0}$ and $\bm{x}>\bm{1}$  if $x_i>0$ and $x_i>1$, respectively, for all $i$; similarly for the inequality $<$. We denote $\bm{x}\geq\bm{x'}$ if there exist at least one $i$ with $x_i>x'_i$, and for other $k\neq i$ we have $x_k\geq x'_k$; similarly for the less-equality $\leq$.

\section{System Model}

\subsection{Network Model}
Denote the set of all cells by $\mathcal{I}$. Denote the set of all UEs by $\mathcal{J}$. Let $n=|\mathcal{I}|$ and $m=|\mathcal{J}|$. Each UE is served at least by one cell with non-zero demand. 
Given that $\mathcal{I}_j$ denotes the set of cells currently serving UE $j$, and $\mathcal{J}_i$ the set of UEs currently served by cell $i$, respectively, the cell-UE association, is then decided. Note that $i\in\mathcal{I}_j\Leftrightarrow j\in\mathcal{J}_i$. This means, if $\mathcal{I}_j$ is fixed, then is $\mathcal{J}_i$, and vise versa. For clarity, we use both of them to indicate the cell-UE association, to keep the coherence in the context.

\subsection{Load Coupling}
We introduce the load coupling model in this subsection. 
For convenience, we fix $\mathcal{I}_j$ (and $\mathcal{J}_i$) in the expression of the signal-to-interference-and-noise-ratio (SINR) and the cell load. 

We define the load of any cell $i$ in Eq.~(\ref{eq:load}). The bitrate demand of UE $j$ is represented by $d_j~(d_j>0)$. The network-wide demand is represented by the vector $\bm{d}=[d_1,d_2,\ldots,d_m]$. The SINR of UE $j$ is denoted by $\gamma_j$. The network-wide SINR is denoted by $\bm{\gamma}=[\gamma_1,\gamma_2,\ldots,\gamma_m]$. We use the term ``resource unit (RU)'' to refer to one or more than one resource blocks in orthogonal-frequency-division multiple access (OFDMA). Without loss of generality, an RU is imposed to be the minimal unit for resource allocation. The bandwidth per RU is denoted by $B$. In the denominator in Eq.~(\ref{eq:sinr}), $B\log_2(1+\gamma_j)$ computes the achievable bitrate per RU. We assume there are $M$ RUs in total, such that $MB\log_2(1+\gamma_j)$ is the total achievable bitrate for UE $j$. Denote $y_j=d_j/(MB\log_2(1+\gamma_j))$, and $y_j$ represents the proportion of RUs required for UE $j$ to satisfy $d_j$. The summation of $y_j$ for all $j\in\mathcal{J}_i$, is called the load of cell $i$, represented by $x_i$. Denote the network-wide load by the vector $\bm{x}=[x_1,x_2,\ldots,x_n]$. As we can see in Eq.~(\ref{eq:load}), $x_i$ is a function of $\bm{\gamma}$, $\bm{d}$ and the association between $i$ and its serving UEs, aka. $\mathcal{J}_i$. We denote this function by $f_i(\bm{\gamma},\bm{d},\mathcal{J}_i)$.
\begin{equation}
x_i\coloneqq{f}_i(\bm{\gamma},\bm{d},\mathcal{J}_i)=\sum_{j\in\mathcal{J}_i}\frac{d_j}{MB\log_2\left(1+\gamma_j\right)}
\label{eq:load}
\end{equation}
\begin{equation}
\gamma_j\coloneqq{h}_j(\bm{x},\bm{p},\mathcal{I}_j)=\frac{\sum_{i\in \mathcal{I}_j}p_ig_{ij}}{\sum_{k\in \mathcal{I} \backslash \mathcal{I}_j}p_kg_{kj}x_k+\sigma^2}
\label{eq:sinr}
\end{equation}

Eq.~(\ref{eq:sinr}), shows the SINR of any UE $j$. Notation $p_i$ is the transmit power of cell $i$ per RU. The network-wide power is denoted by the vector $\bm{p}=[p_1,p_2,\ldots,p_n]$. The channel gain from cell $i$ to UE $j$ is denoted by $g_{ij}$. In the numerator, the transmit power received at UE $j$ from all its serving cells $\mathcal{I}_j$ is computed by $\sum_{i\in\mathcal{I}_j}p_ig_{ij}$. In the denominator, $\sum_{k\in\mathcal{I}\backslash\mathcal{I}_k}p_kg_{kj}x_k$ computes the interference power received at UE $j$ from cell $k$. In an extreme case that none of the RUs in cell $k$ is occupied for transmission, we have $x_k=0$ and the term $p_kg_{kj}x_k=0$. This means that UE $j$ does not receive interference on any RU of cell $k$, no matter which RUs $j$ is now occupied in its serving cells. On the contrary, if all RUs in cell $k$ are occupied for transmission (meaning that $k$ is in full load and $x_k=1$), then UE $j$ always receives interference from cell $k$, no matter which RUs are used by $j$ on its serving cells. In this case, the interference from cell $k$ to UE $j$ is computed by $p_kg_{kj}$. Generally, for any RU in cell $i$, the value of $x_k$ reflects the likelihood the cell $i$ receives the interference from cell $k$ on this RU. Note that the SINR $\gamma_j$ is a function of the network-wide power $\bm{p}$, the load $\bm{x}$, and the association between $j$ and its serving cells $\mathcal{I}_j$. We denote this function by $h_j(\bm{x},\bm{p},\mathcal{I}_j)$.

As we can see by the discussion above, for any UE $j$, the load of its interfering cells $k\in\mathcal{I}\backslash\mathcal{I}_j$ impacts the SINR $\gamma_j$, further causing an influence on the load of $j$'s serving cells $i\in\mathcal{I}_j$. This characteristics that the usage of RUs on different cells are mutually influenced, is called load coupling. Eq~(\ref{eq:load_coupling}) shows this relationship in a network-wide perspective.
\begin{equation}
\textbf{Load Coupling: }\left\{
\begin{array}{l}
\bm{\gamma}=\bm{h}(\bm{x},\bm{p},\hat{\mathcal{I}}) \\
\bm{x}=\bm{f}(\bm{\gamma},\bm{d},\hat{\mathcal{J}})
\end{array}\right.
\label{eq:load_coupling}
\end{equation}

For the load coupling, the composition of function $\bm{f}$ and $\bm{h}$, i.e. $\bm{f}(\bm{h}(\bm{x},\bm{p},\hat{\mathcal{I}}),\bm{d},\hat{\mathcal{J}})$, is a standard interference function (SIF) \cite{Yates:1995eh} in $\bm{x}$. The proof of this is in \cite{You:2015wva}. The definition of the SIF is given below \cite{Cavalcante:2014jd}. 

\begin{definition}
A function $\bm{f}$: $\mathbb{R}^m_+\rightarrow\mathbb{R}_{++}$ is called an SIF if the following properties hold:
\begin{enumerate}
\item (Scalability) $\alpha \bm{f}(\bm{x})>\bm{f}(\alpha\bm{x}),~\forall \bm{x}\in \mathbb{R}^m_+,~\alpha>1$.
\item (Monotonicity) $\bm{f}(\bm{x})\geq \bm{f}(\bm{x'})$, if $\bm{x} \geq \bm{x'}$.
\end{enumerate}
\label{def:SIF}
\end{definition}

In \cite{Yates:1995eh}, the convergence point of an SIF is proved to be unique, and can be obtained by the fix-point iteration. Suppose the association between cell and UE is fixed. For any given power vector $\bm{p}$ and demand vector $\bm{d}$, we have a unique load vector $\bm{x}$ satisfying $\bm{x}=\bm{f}(\bm{h}(\bm{x},\bm{p}),\bm{d})$. We call the relationship of the mutual influence between $\bm{p}$ and $\bm{x}$ the \textbf{power-load coupling}.
For the SIF, we introduce some propositions and lemmas. They work as fundamentals for the analysis on theoretical aspects for energy minimization. Proposition~\ref{lma:increasing} comes directly from the monotonicity of the SIF. The proof of Lemma~\ref{lma:1/x} and Lemma~\ref{lma:Ax} can be found in \cite{Anonymous:Dv6oz3oX}. Proposition~\ref{lma:increasing} and Proposition~\ref{lma:sigma} are used as the fundamental for proving Lemma~\ref{lma:x'>=x} and Lemma~\ref{lma:alpha_x} in Section~\ref{sec:energy}, respectively.

\begin{proposition}
For the sequence $\bm{x}^{(0)},\bm{x}^{(1)},\ldots$ generated by fix-point iteration, if there exists $k$ satisfying $\bm{f}(\bm{x}^{(k)}) \leq\bm{f}(\bm{x}^{(k-1)})$, then the sequence $\bm{x}^{(k)},\bm{x}^{(k+1)},\ldots$ is monotonously decreasing (in every component), otherwise if $\bm{f}(\bm{x}^{(k)})\geq\bm{f}(\bm{x}^{(k-1)})$, then the sequence is monotonously increasing (in every component).	
\label{lma:increasing}
\end{proposition}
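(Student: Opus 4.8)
The plan is to prove the statement by a short induction that uses only the monotonicity property of the SIF, exploiting the fact that the fix-point iteration links consecutive iterates through $\bm{x}^{(l+1)}=\bm{f}(\bm{x}^{(l)})$. First I would rewrite both hypotheses directly in terms of the iterates. Since $\bm{x}^{(k)}=\bm{f}(\bm{x}^{(k-1)})$ and $\bm{x}^{(k+1)}=\bm{f}(\bm{x}^{(k)})$, the assumption $\bm{f}(\bm{x}^{(k)})\leq\bm{f}(\bm{x}^{(k-1)})$ is exactly $\bm{x}^{(k+1)}\leq\bm{x}^{(k)}$, and likewise $\bm{f}(\bm{x}^{(k)})\geq\bm{f}(\bm{x}^{(k-1)})$ is $\bm{x}^{(k+1)}\geq\bm{x}^{(k)}$. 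This reduces the whole claim to a single statement: once one consecutive pair of iterates is ordered, that ordering propagates to every later pair.

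For the decreasing case I would argue by induction on $l\geq k$ that $\bm{x}^{(l+1)}\leq\bm{x}^{(l)}$. The base case $l=k$ is precisely the rewritten hypothesis. For the inductive step, assuming $\bm{x}^{(l+1)}\leq\bm{x}^{(l)}$, I apply the monotonicity property of Definition~\ref{def:SIF} to these two arguments, obtaining $\bm{f}(\bm{x}^{(l+1)})\leq\bm{f}(\bm{x}^{(l)})$, which is exactly $\bm{x}^{(l+2)}\leq\bm{x}^{(l+1)}$. Hence every consecutive pair from index $k$ onward is ordered in the same direction, so the tail sequence $\bm{x}^{(k)},\bm{x}^{(k+1)},\ldots$ is non-increasing in every component. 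The increasing case follows from the identical induction with all inequalities reversed, using monotonicity in the form $\bm{x}^{(l+1)}\geq\bm{x}^{(l)}\Rightarrow\bm{f}(\bm{x}^{(l+1)})\geq\bm{f}(\bm{x}^{(l)})$.

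The only point that needs a moment of care is the ordering relation itself: the paper's $\leq$ and $\geq$ carry the convention that at least one component is strict while the remaining components are ordered weakly, and I must check that monotonicity preserves this particular flavour of inequality. It does, because the stated monotonicity already returns an inequality of the same type, so the hypothesis of the next induction step is automatically available; I should also note the degenerate situation in which two consecutive iterates coincide, where the iteration has already reached the unique fixed point guaranteed in \cite{Yates:1995eh} and the sequence is thereafter constant, so the monotone conclusion holds trivially. I expect no genuine obstacle here: as remarked before the statement, the result is an immediate consequence of monotonicity, and its entire content is the propagation-by-induction argument outlined above.
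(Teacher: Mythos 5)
Your proposal is correct and matches the paper's own justification: the paper offers no explicit proof beyond the remark that the proposition ``comes directly from the monotonicity of the SIF,'' and your induction on consecutive iterates is exactly the standard argument that remark alludes to. Your added care about the paper's strict-component convention for $\geq$ and the degenerate fixed-point case is sound and goes slightly beyond what the paper records.
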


\begin{lemma}
The function $\mathbb{R}\rightarrow\mathbb{R}:x\mapsto1/\log[1/(1+\frac{1}{x})]$ is concave.
\label{lma:1/x}
\end{lemma}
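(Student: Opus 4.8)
The plan is to reduce the statement to a second-order test, while carefully tracking the sign that the outer reciprocal conceals. First I would simplify the argument of the logarithm: since $1/(1+1/x)=x/(x+1)$, the function is
\[
g(x)=\frac{1}{\log\bigl(x/(x+1)\bigr)}=\frac{1}{\log x-\log(x+1)}=-\frac{1}{\log(1+1/x)},
\]
and for $x>0$ (the range on which the expression is defined and relevant) the denominator $M(x)\coloneqq\log\bigl(x/(x+1)\bigr)$ is strictly negative, since $0<x/(x+1)<1$. Keeping this negativity is the whole point: the target is the concavity of $g$, which is the \textit{negative} of $1/\log(1+1/x)$, and not the concavity of $1/\log(1+1/x)$ itself.

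Next I would differentiate. For the denominator,
\[
M'(x)=\frac{1}{x}-\frac{1}{x+1}=\frac{1}{x(x+1)}>0,\qquad M''(x)=-\frac{1}{x^2}+\frac{1}{(x+1)^2}=-\frac{2x+1}{x^2(x+1)^2}<0,
\]
and differentiating $g=1/M$ twice gives $g''=M^{-3}\bigl(2(M')^2-M M''\bigr)$. The decisive bookkeeping step is that $M<0$ forces $M^{-3}<0$, so $g''\le 0$ is equivalent to $2(M')^2-M M''\ge 0$. Substituting the derivatives above and using $\log\bigl(x/(x+1)\bigr)=-\log(1+1/x)$ collapses the bracket to
\[
2(M')^2-M M''=\frac{2-(2x+1)\log(1+1/x)}{x^2(x+1)^2}.
\]
Since the prefactor is positive, concavity of $g$ on $(0,\infty)$ is equivalent to the single scalar inequality $\Phi(x)\coloneqq(2x+1)\log(1+1/x)\le 2$.

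The remaining and main task is therefore this transcendental inequality. I would attack it by combining monotonicity of $\Phi$ with its behaviour at infinity: compute $\Phi'$ and reduce the sign of $\Phi'$ to an elementary comparison between $\log(1+1/x)$ and a rational function of $x$, then evaluate $\lim_{x\to\infty}\Phi(x)$ through the expansion $\log(1+u)=u-\tfrac{u^2}{2}+\tfrac{u^3}{3}-\cdots$ with $u=1/x$, which yields the limiting value $2$. Showing that $\Phi$ is monotone and comparing it with this limit determines the sign of $2-\Phi(x)$ across the whole half-line; together with the $M^{-3}<0$ sign flip this gives $g''\le 0$, i.e.\ concavity. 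I expect this inequality to be the hard part: the two sides coincide asymptotically as $x\to\infty$, so the crude logarithmic bounds $\tfrac{u}{1+u}\le\log(1+u)\le u$ are too lossy near infinity, and the argument has to be closed with the sharper second-order estimate of $\log(1+1/x)$.
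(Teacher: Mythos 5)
Your reduction is algebraically correct, but the inequality it lands on is false, and that sinks the proof. You need $\Phi(x)=(2x+1)\log(1+1/x)\le 2$ on $(0,\infty)$; in fact the strict reverse holds everywhere. The classical bound $\log t>2(t-1)/(t+1)$ for $t>1$, applied with $t=1+1/x$ (so that $2(t-1)/(t+1)=2/(2x+1)$), gives $\Phi(x)>2$ for every $x>0$; numerically, $\Phi(10)=21\log(1.1)\approx 2.0015$. Your own asymptotic expansion confirms this: $\Phi(x)=2+\tfrac{1}{6x^2}+O(x^{-3})$, so $\Phi$ approaches its limit $2$ \emph{from above} (indeed $\Phi$ decreases monotonically from $+\infty$ to $2$, by the trapezoid bound $\int_x^{x+1}\!dt/t<\tfrac12\bigl(\tfrac1x+\tfrac1{x+1}\bigr)$). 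Hence $2-\Phi(x)<0$, so $2(M')^2-MM''<0$, and with $M^{-3}<0$ you get $g''>0$: the function $1/\log[1/(1+1/x)]=-1/\log(1+1/x)$, taken literally, is strictly \emph{convex} on $(0,\infty)$, not concave. Had you carried out your stated plan (monotonicity of $\Phi$ plus the limit at infinity), it would have disproved the claim rather than proved it.

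The resolution is that the lemma as printed contains a sign slip, and the sign you were so careful to preserve is precisely the typo. What the paper actually uses is the per-UE load $d/\bigl(MB\log_2(1+S/I)\bigr)$ as a function of the interference-plus-noise $I$, which is affine in $\bm{x}$ and $\sigma^2$; in normalized form this is $\phi(x)=1/\log(1+1/x)$, i.e., the negative of the literal expression, \emph{without} the inner reciprocal. That function is positive and genuinely concave, which is what Proposition~\ref{lma:sigma} (scalability via concavity plus $\phi(0^+)=0$) requires. For this corrected statement your computation goes through verbatim with $L(x)=\log(1+1/x)>0$: one gets $\phi''=L^{-3}\left(2(L')^2-LL''\right)$ with $L^{-3}>0$, and concavity becomes equivalent to $(2x+1)\log(1+1/x)\ge 2$ — exactly the true inequality above, closed by $\log t\ge 2(t-1)/(t+1)$. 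Note that the paper never proves this lemma itself (it defers to the convexity textbook it cites), so your second-order route is a legitimate way to supply the missing proof; it just has to be aimed at the correctly signed function.
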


\begin{lemma}
Suppose $\bm{A}\in\mathbb{R}^{n\times m}$ and $\bm{b}\in\mathbb{R}^{n}$. Define $\varphi:\mathbb{R}^n_+\rightarrow\mathbb{R}$. If $\varphi$ is concave, so is $f(\bm{A}\bm{x}+\bm{b})$.
\label{lma:Ax}
\end{lemma}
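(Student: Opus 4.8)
The plan is to prove the statement directly from the definition of concavity, exploiting the single structural fact that an affine map sends convex combinations to convex combinations. I read the claim (correcting an evident typo, $f$ for $\varphi$) as: the composed map $g(\bm{x})\coloneqq\varphi(\bm{A}\bm{x}+\bm{b})$ is concave on the domain $\{\bm{x}\in\mathbb{R}^m:\bm{A}\bm{x}+\bm{b}\in\mathbb{R}^n_+\}$. Concretely, I want to establish that for any two admissible points $\bm{x}_1,\bm{x}_2$ and any $\lambda\in[0,1]$ the inequality $g(\lambda\bm{x}_1+(1-\lambda)\bm{x}_2)\geq\lambda g(\bm{x}_1)+(1-\lambda)g(\bm{x}_2)$ holds.

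First I would rewrite the argument of $\varphi$ at the convex combination. Since $\bm{A}(\cdot)$ is linear and the constant vector splits as $\bm{b}=\lambda\bm{b}+(1-\lambda)\bm{b}$, the key identity
\begin{equation}
\bm{A}\big(\lambda\bm{x}_1+(1-\lambda)\bm{x}_2\big)+\bm{b}=\lambda(\bm{A}\bm{x}_1+\bm{b})+(1-\lambda)(\bm{A}\bm{x}_2+\bm{b})
\label{eq:affine_split}
\end{equation}
shows that the affine image of a convex combination is the convex combination of the affine images. Substituting Eq.~(\ref{eq:affine_split}) into the definition of $g$ turns the left-hand side into $\varphi$ evaluated at a convex combination of the two points $\bm{A}\bm{x}_1+\bm{b}$ and $\bm{A}\bm{x}_2+\bm{b}$, both of which lie in $\mathbb{R}^n_+$.

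Second, I would invoke the hypothesized concavity of $\varphi$ at exactly this convex combination, obtaining $\varphi(\lambda(\bm{A}\bm{x}_1+\bm{b})+(1-\lambda)(\bm{A}\bm{x}_2+\bm{b}))\geq\lambda\varphi(\bm{A}\bm{x}_1+\bm{b})+(1-\lambda)\varphi(\bm{A}\bm{x}_2+\bm{b})$. Recognizing the two terms on the right as $\lambda g(\bm{x}_1)$ and $(1-\lambda)g(\bm{x}_2)$ closes the chain and yields concavity of $g$.

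There is no serious obstacle here; the entire content is the affine-splitting identity in Eq.~(\ref{eq:affine_split}), which is elementary. The only points demanding mild care are domain-related: one must restrict attention to those $\bm{x}$ for which $\bm{A}\bm{x}+\bm{b}\in\mathbb{R}^n_+$, so that $\varphi$ is defined at both endpoints \emph{and} at every point of the connecting segment. That last requirement is itself a consequence of Eq.~(\ref{eq:affine_split}) together with the convexity of $\mathbb{R}^n_+$, so it does not introduce any real difficulty. Since the argument nowhere uses the sign of $\lambda$ beyond $\lambda\in[0,1]$, the same reasoning would give convexity from convexity of $\varphi$, which is the intended use in the sequel.
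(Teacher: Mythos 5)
Your proof is correct and is the standard affine-composition argument; the paper itself does not prove Lemma~\ref{lma:Ax} but defers to the cited reference (Boyd and Vandenberghe), where exactly this argument appears. Your reading of the typo ($f$ for $\varphi$) and the domain restriction to $\{\bm{x}:\bm{A}\bm{x}+\bm{b}\in\mathbb{R}^n_+\}$ are both appropriate.
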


\begin{proposition}
Scalability holds for $\bm{f}(\bm{h}(\sigma^2,\bm{x}))$ in $[\sigma^2,\bm{x}]$.	
\label{lma:sigma}
\end{proposition}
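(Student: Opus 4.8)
The plan is to read scalability off the algebraic structure of $\bm{h}$ as a function of the concatenated variable $[\sigma^2,\bm{x}]$, and then to reduce the required vector inequality to one scalar inequality per UE. Throughout, fix $\bm{p}$, $\bm{d}$ and the association. For every UE $j$ set $N_j=\sum_{i\in\mathcal{I}_j}p_ig_{ij}$ and $D_j(\sigma^2,\bm{x})=\sum_{k\in\mathcal{I}\backslash\mathcal{I}_j}p_kg_{kj}x_k+\sigma^2$, so that $h_j(\sigma^2,\bm{x})=N_j/D_j(\sigma^2,\bm{x})$. The numerator $N_j$ is constant, while $D_j$ is linear and positively homogeneous of degree one in $[\sigma^2,\bm{x}]$. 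Hence for any $\alpha>1$, $D_j(\alpha\sigma^2,\alpha\bm{x})=\alpha D_j(\sigma^2,\bm{x})$, which gives at once $h_j(\alpha\sigma^2,\alpha\bm{x})=\tfrac{1}{\alpha}h_j(\sigma^2,\bm{x})$: scaling the pair $[\sigma^2,\bm{x}]$ up by $\alpha$ scales every SINR down by exactly $1/\alpha$.

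Next I would reduce scalability, in the sense of Definition~\ref{def:SIF}, to a per-UE statement. Writing $\gamma_j:=h_j(\sigma^2,\bm{x})>0$ and recalling $f_i(\bm{h}(\sigma^2,\bm{x}))=\sum_{j\in\mathcal{J}_i}\frac{d_j}{MB\log_2(1+\gamma_j)}$, the target $\alpha f_i(\bm{h}(\sigma^2,\bm{x}))>f_i(\bm{h}(\alpha\sigma^2,\alpha\bm{x}))$ follows term by term over the nonempty set $\mathcal{J}_i$, once I show for each $j$ that
\begin{equation}
\frac{\alpha}{\log_2(1+\gamma_j)}>\frac{1}{\log_2\!\left(1+\gamma_j/\alpha\right)},
\label{eq:plan-term}
\end{equation}
where I have used $h_j(\alpha\sigma^2,\alpha\bm{x})=\gamma_j/\alpha$ on the right-hand side. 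Since both logarithms are positive, \eqref{eq:plan-term} is equivalent to $\alpha\log_2(1+\gamma_j/\alpha)>\log_2(1+\gamma_j)$.

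The crux, and the step I expect to be the main obstacle, is this scalar inequality. I would prove it by monotonicity: put $g(s)=s\log_2(1+\gamma_j/s)$ for $s\geq1$ and show $g(\alpha)>g(1)$. Differentiating and substituting $u=\gamma_j/s>0$ gives $g'(s)=\frac{1}{\ln 2}\bigl[\ln(1+u)-\frac{u}{1+u}\bigr]$, and the bracket is strictly positive for $u>0$ because $\psi(u)=\ln(1+u)-\frac{u}{1+u}$ satisfies $\psi(0)=0$ and $\psi'(u)=\frac{u}{(1+u)^2}>0$. Thus $g$ is strictly increasing, so $g(\alpha)>g(1)=\log_2(1+\gamma_j)$ and \eqref{eq:plan-term} holds. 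Summing the strict termwise inequalities over $\mathcal{J}_i$ gives $\alpha f_i(\bm{h}(\sigma^2,\bm{x}))>f_i(\bm{h}(\alpha\sigma^2,\alpha\bm{x}))$ for every cell $i$, which is precisely scalability of $\bm{f}(\bm{h}(\sigma^2,\bm{x}))$ in $[\sigma^2,\bm{x}]$.

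Finally, I note an alternative that reuses the machinery already set up: since $1/\gamma_j=D_j/N_j$ is affine in $[\sigma^2,\bm{x}]$ with nonnegative coefficients, Lemma~\ref{lma:1/x} and Lemma~\ref{lma:Ax} make each term $\frac{d_j}{MB\log_2(1+\gamma_j)}$ a concave function of $[\sigma^2,\bm{x}]$, so $f_i(\bm{h}(\sigma^2,\bm{x}))$ is concave. As this composite vanishes at the origin and is strictly concave along the scaling ray (note $D_j$ strictly increases under scaling), the strict sub-homogeneity $\alpha f_i(\bm{h}(\sigma^2,\bm{x}))>f_i(\bm{h}(\alpha\sigma^2,\alpha\bm{x}))$ follows again. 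The direct route above, however, avoids sign bookkeeping and is the one I would write out in full.
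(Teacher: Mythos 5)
Your direct route is correct, and it is genuinely different from the paper's argument. The paper proves the proposition in one line: it observes that each component of $\bm{f}(\bm{h}(\sigma^2,\bm{x}))$ is the concave scalar function of Lemma~\ref{lma:1/x} composed with an affine map of $[\sigma^2,\bm{x}]$ (Lemma~\ref{lma:Ax}), hence concave, and then implicitly invokes the standard fact that a nonnegative concave function on the positive orthant is sub-homogeneous, i.e., scalable. You instead exploit the exact homogeneity $h_j(\alpha\sigma^2,\alpha\bm{x})=\gamma_j/\alpha$ to reduce scalability to the per-UE scalar inequality $\alpha\log_2(1+\gamma/\alpha)>\log_2(1+\gamma)$, which you settle by showing $s\mapsto s\log_2(1+\gamma/s)$ is strictly increasing; your derivative computation and the positivity of $\psi(u)=\ln(1+u)-u/(1+u)$ are correct, so the argument is complete. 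What your route buys is self-containedness and an explicit \emph{strict} inequality, which is what Definition~\ref{def:SIF} actually demands: the concavity argument alone only yields $\alpha\varphi(\bm{x})\geq\varphi(\alpha\bm{x})$ when the composite vanishes at the origin (as it does here), so one must additionally note strict concavity along the scaling ray --- a point the paper's one-line proof glosses over and which you correctly flag in your closing remark. What the paper's route buys is brevity and reuse: the same concavity machinery (Lemmas~\ref{lma:1/x} and~\ref{lma:Ax}) is invoked again in the proof of Lemma~\ref{lma:alpha_x}, so establishing concavity once serves two purposes. Your sketched alternative in the final paragraph is essentially the paper's proof, made slightly more careful.
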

\begin{proof}
The function $\bm{f}(\bm{h}(\sigma^2,\bm{x}))$ can be obtained by making linear transformation for $x$ in the function $1/\log_2(1/(1+1/x))$, of which the concavity still holds after the transformation, according to Lemma~\ref{lma:1/x} and Lemma~\ref{lma:Ax}. Hence the conclusion.
\end{proof}


\section{Problem Formulation}

The energy minimization problem (\textit{MinE}), is formulated in this section. For the sake of mathematical presentation, the association, i.e. $\hat{\mathcal{I}}$ (or $\hat{\mathcal{J}}$), is replaced by an $n\times m$ matrix $\bm{\kappa}$. And $\kappa_{ij}=1$ means that cell $i$ is currently serving UE $j$. In other words, we have
	$\kappa_{ij}=1~~\Leftrightarrow~~i\in\mathcal{I}_j$~and~$j\in\mathcal{J}_i$, 
	$\kappa_{ij}=0~~\Leftrightarrow~~i\notin\mathcal{I}_j~\textnormal{and}~j\notin\mathcal{J}_i$, with
	$\kappa_{ij}\in\{0,1\}$.
The objective is to minimize the total energy consumption for transmission. 
Recall that for all $i\in\mathcal{I}$, $p_i$ is the transmit power per RU in cell $i$, and $x_i$ is the proportion of allocated RUs for transmission in cell $i$. 
Therefore, the total energy consumed in cell $i$ is $\bm{p}^{\mathsf{T}}\bm{x}=\sum_{i\in\mathcal{I}}p_i(x_i\times M)$, where $x_i\times M$ is the number of occupied RUs in cell $i$.
The formulation is as follows.
\begin{subequations}
\begin{alignat}{2}
[\textit{MinE}]&\quad\min\limits_{\bm{\kappa},\bm{d},\bm{p},\bm{x}} \quad \bm{p}^{\mathsf{T}}\bm{x}\\
 \textnormal{s.t.} &\quad  \bm{x}=\bm{f}(\bm{h}(\bm{x},\bm{p},\bm{\kappa}),\bm{d},\bm{\kappa})  \\
   & \quad \bm{p}\leq \bm{p}_{max} \\
    & \quad \bm{d}\geq\bm{d}_{min} \\
 &\quad 0< x_{i}\leq 1 &\forall i\in\mathcal{I} \\
 & \quad \kappa_{ij}\in\{0,1\} &\forall i\in\mathcal{I},~j\in\mathcal{J}
\end{alignat}
\label{eq:p1}
\end{subequations}
 Due to that $M$ is a constant, we set the objective of \textit{MinE} to be $\sum_{i\in\mathcal{I}}p_ix_i$, shown in (\ref{eq:p1}a) in \textit{MinE}. 
We impose (\ref{eq:p1}b) to be the power-load coupling constraint. The inequalities (\ref{eq:p1}c) and (\ref{eq:p1}d) are constraints for the maximal transmit power and the minimal user demand, respectively. In constraint (\ref{eq:p1}e), the cell load is limited to be no more than $\bm{x}=\bm{1}$, aka. the full load. The cell-UE association is one of the variables in \textit{MinE}, imposed to be binary in (\ref{eq:p1}f). The variables in \textit{MinE} are $\bm{\kappa}$, $\bm{d}$, $\bm{x}$, and $\bm{p}$.

\section{Energy Minimization: Analysis and Solution}
\label{sec:energy}

In this section, we give theoretical analysis on how to optimize the transmission energy. Based on the theoretical properties, we respectively propose a power allocation algorithm and an algorithm to optimize the cell-UE association. 

\begin{proposition}
$\bm{d}=\bm{d}^{min}$ is optimal for \textit{MinE}.
\label{lma:d_min}
\end{proposition}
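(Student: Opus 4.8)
The plan is to show that the load vector pinned down by the power--load coupling constraint (\ref{eq:p1}b) is monotonically nondecreasing in the demand $\bm{d}$, so that, with the power $\bm{p}$ and the association $\bm{\kappa}$ held fixed, pushing every demand down to its lower bound can only decrease the objective $\bm{p}^{\mathsf{T}}\bm{x}$. Throughout I would fix an arbitrary feasible $\bm{\kappa}$ and $\bm{p}$ and abbreviate the coupling operator as $T_{\bm{d}}(\bm{x})=\bm{f}(\bm{h}(\bm{x},\bm{p},\bm{\kappa}),\bm{d},\bm{\kappa})$. By the result cited from \cite{You:2015wva}, $T_{\bm{d}}$ is an SIF in $\bm{x}$ in the sense of Definition~\ref{def:SIF}, hence it has a unique fixed point $\bm{x}^{\star}(\bm{d})$, namely the load solving (\ref{eq:p1}b), reachable by fix-point iteration.

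First I would record the elementary monotonicity of $T_{\bm{d}}$ in the demand. Inspecting (\ref{eq:load}), each component $f_i$ is a sum of terms $d_j/(MB\log_2(1+\gamma_j))$, which is strictly increasing in every $d_j$ at a fixed SINR, while the SINR map $\bm{h}$ in (\ref{eq:sinr}) does not depend on $\bm{d}$ at all. Consequently, for $\bm{d}\geq\bm{d}'$ one has $T_{\bm{d}}(\bm{x})\geq T_{\bm{d}'}(\bm{x})$ at every fixed $\bm{x}$.

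The core step is to lift this pointwise domination of the operators to a domination of their fixed points. I would run the fix-point iteration of $T_{\bm{d}}$ started from $\bm{x}^{(0)}=\bm{x}^{\star}(\bm{d}')$. Then $\bm{x}^{(1)}=T_{\bm{d}}(\bm{x}^{(0)})\geq T_{\bm{d}'}(\bm{x}^{(0)})=\bm{x}^{(0)}$, so the first step is nondecreasing; by the monotonicity of the SIF $T_{\bm{d}}$ together with Proposition~\ref{lma:increasing}, the whole iterated sequence is then monotonically nondecreasing and converges to $\bm{x}^{\star}(\bm{d})$, giving $\bm{x}^{\star}(\bm{d})\geq\bm{x}^{\star}(\bm{d}')$ whenever $\bm{d}\geq\bm{d}'$. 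This is the part I expect to demand the most care, since the load and the SINR are mutually coupled and the monotone dependence on $\bm{d}$ therefore cannot be read off directly, but must be routed through the SIF fix-point machinery.

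Finally I would close with an exchange argument on an arbitrary optimum. Take any feasible $(\bm{\kappa},\bm{d},\bm{x},\bm{p})$, replace $\bm{d}$ by $\bm{d}^{min}$ while keeping $\bm{\kappa}$ and $\bm{p}$, and recompute the load as $\bm{x}^{\star}(\bm{d}^{min})$. Feasibility is preserved: the recomputed load stays positive and, by the monotonicity just proved, satisfies $\bm{0}<\bm{x}^{\star}(\bm{d}^{min})\leq\bm{x}\leq\bm{1}$, while (\ref{eq:p1}c)--(\ref{eq:p1}d) continue to hold. Since $\bm{p}>\bm{0}$, the objective obeys $\bm{p}^{\mathsf{T}}\bm{x}^{\star}(\bm{d}^{min})\leq\bm{p}^{\mathsf{T}}\bm{x}$, so the modified solution is no worse than the original. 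Hence some optimal solution of \textit{MinE} has $\bm{d}=\bm{d}^{min}$, which is exactly the claim.
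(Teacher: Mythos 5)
Your proposal is correct and follows essentially the same route as the paper: the paper's proof is a two-line invocation of Proposition~\ref{lma:increasing} (reducing demand decreases the load at convergence, hence $\bm{d}=\bm{d}^{min}$ is optimal), and your argument is simply the fully spelled-out version of that, adding the pointwise domination $T_{\bm{d}}(\bm{x})\geq T_{\bm{d}'}(\bm{x})$, the fixed-point comparison via monotone iteration, and the feasibility-preserving exchange step. No gap; you have just made explicit what the paper leaves implicit.
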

\begin{proof}
According to Proposition~\ref{lma:increasing}, if we reduce the data rate, then the load at convergence will decrease. Thus the optimal setting of demand is $\bm{d}=\bm{d}^{min}$.
\end{proof}

According to Proposition~\ref{lma:d_min}, we can set $\bm{d}$ to $\bm{d}^{min}$ in \textit{MinE} without loss of optimality. For the clarity of discussion, we set $\bm{d}=\bm{d}^{min}$. Then the variables in this section are the power $\bm{p}$, the load $\bm{x}$ and the association $\bm{\kappa}$. For the sake of presentation, we use $\textnormal{\textit{Fix}}\{\bm{f}(\bm{h}(\bm{x},\bm{p},\bm{\kappa}),\bm{\kappa})\}$ to denote the fixed point of the function $\bm{f}(\bm{h}(\bm{x},\bm{p},\bm{\kappa}),\bm{\kappa})$. We formally define the notations of $\bm{p}$, $\bm{p'}$, $\bm{x}$ and $\bm{x'}$, in Definition~\ref{def:p_x} below. 

\subsection{Analysis on Power Allocation}
In \cite{Ho:2015hw,Anonymous:SLcg4Bln}, the authors showed the optimality of the full load in the non-JT case. However, in the JT case, this conclusion does not hold, because the full load may not be possible for all cells. For example, suppose there are two cells $c_1$ and $c_2$, and two UEs $u_1$ and $u_2$. Cell $c_1$ serves $u_1$ and $u_2$, and cell $c_2$ serves only $u_2$. If the demands of both $u_1$ and $u_2$ are non-zero, then $c_2$ cannot be in full load. Even we know the optimal load setting for \textit{MinE}, the power solution is not always unique. Due to these reasons, the conclusion in \cite{Ho:2015hw,Anonymous:SLcg4Bln} does not hold in general with JT. In this subsection, we show that, in but not limited to the JT case, the energy performance always benefits from letting the power of all cells be scaled down uniformly.

\begin{definition} Given any association $\bm{\kappa}$, we define the notations $\bm{p}$, $\bm{p'}$, $\bm{x}$ and $\bm{x'}$ as follows.
\begin{enumerate}
	\item Denote $\bm{p'}=\bm{p}/\alpha$, with $\alpha>1$.
	\item Denote $\bm{x}=\bm{f}(\bm{h}(\bm{x},\bm{p}))$, i.e. $\bm{x}$ is the fixed point corresponding to $\bm{p}$. 
	\item Denote $\bm{x'}=\bm{f}(\bm{h}(\bm{x'},\bm{p'}))$, i.e. $\bm{x'}$ is the fixed point corresponding to $\bm{p'}$. 
\end{enumerate}
	\label{def:p_x}
\end{definition}

In Definition~\ref{def:p_x}, $\bm{p'}$ is the power scaled down from $\bm{p}$ by the scaling constant $\alpha>1$. The load $\bm{x'}$ is the fixed point in the power-load coupling equations with the power $\bm{p'}$. Lemma~\ref{lma:x'>=x} and Lemma~\ref{lma:alpha_x} show properties for such a scaling operation on $\bm{p}$. Lemma~\ref{lma:x'>=x} shows that, when the power is scaled down, the load increases. Lemma~\ref{lma:alpha_x} gives an upper bound on the increased load $\bm{x'}$. Based on Lemma~\ref{lma:alpha_x}, we get Theorem~\ref{thm:pre_opt} that serves as a theoretical support for the proposed power allocation algorithm. Based on Lemma~\ref{lma:x'>=x} and Theorem~\ref{thm:pre_opt}, we propose our power allocation algorithm.

\begin{lemma}
$\bm{x'}\geq\bm{x}$.
\label{lma:x'>=x}
\end{lemma}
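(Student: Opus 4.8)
The plan is to exploit that a uniform power scale-down leaves the signal and interference terms proportionally unchanged but magnifies the relative weight of the (unscaled) noise $\sigma^2$, so that every UE's SINR drops and, consequently, every cell's load rises. Concretely, I would first establish a pointwise domination between the two composed interference maps $\bm{f}(\bm{h}(\cdot,\bm{p'}))$ and $\bm{f}(\bm{h}(\cdot,\bm{p}))$, and then transfer this domination to their fixed points $\bm{x'}$ and $\bm{x}$ through the monotone fixed-point iteration that an SIF admits.

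First I would fix an arbitrary load vector $\bm{x}$ and compare $\bm{h}(\bm{x},\bm{p'})$ with $\bm{h}(\bm{x},\bm{p})$ componentwise. Substituting $\bm{p'}=\bm{p}/\alpha$ into Eq.~(\ref{eq:sinr}) and cancelling the common factor $1/\alpha$ from numerator and denominator gives
\[
h_j(\bm{x},\bm{p'})=\frac{\sum_{i\in\mathcal{I}_j}p_ig_{ij}}{\sum_{k\in\mathcal{I}\backslash\mathcal{I}_j}p_kg_{kj}x_k+\alpha\sigma^2},
\]
which coincides with $h_j(\bm{x},\bm{p})$ except that the noise $\sigma^2$ is replaced by $\alpha\sigma^2$. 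Since $\alpha>1$ enlarges the denominator while the numerator is unchanged, $h_j(\bm{x},\bm{p'})<h_j(\bm{x},\bm{p})$ for every $j$. Because each term $d_j/(MB\log_2(1+\gamma_j))$ in Eq.~(\ref{eq:load}) is strictly decreasing in $\gamma_j$ (recall $d_j>0$), lowering every SINR raises every load entry, and therefore $\bm{f}(\bm{h}(\bm{x},\bm{p'}))\geq\bm{f}(\bm{h}(\bm{x},\bm{p}))$ for all $\bm{x}$.

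With this pointwise domination in hand, I would run the fixed-point iteration of the SIF $\bm{f}(\bm{h}(\cdot,\bm{p'}))$ seeded at $\bm{x}^{(0)}=\bm{x}$. Using $\bm{x}=\bm{f}(\bm{h}(\bm{x},\bm{p}))$ from Definition~\ref{def:p_x}, the first iterate obeys $\bm{x}^{(1)}=\bm{f}(\bm{h}(\bm{x},\bm{p'}))\geq\bm{f}(\bm{h}(\bm{x},\bm{p}))=\bm{x}=\bm{x}^{(0)}$. Monotonicity of the SIF propagates this forward, so by Proposition~\ref{lma:increasing} the entire sequence $\bm{x}^{(0)},\bm{x}^{(1)},\ldots$ is nondecreasing in every component, while Yates' result \cite{Yates:1995eh} ensures it converges to the unique fixed point $\bm{x'}$ of $\bm{f}(\bm{h}(\cdot,\bm{p'}))$. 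A nondecreasing sequence starting from $\bm{x}$ has a limit no smaller than $\bm{x}$, whence $\bm{x'}\geq\bm{x}$.

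I expect the main obstacle to be the careful bookkeeping in this last step. Proposition~\ref{lma:increasing} governs a single SIF, so the iteration must be carried out entirely with the scaled map $\bm{f}(\bm{h}(\cdot,\bm{p'}))$, using the original equilibrium $\bm{x}$ only as the seed and as the lower comparison at the initial step; conflating the two interference maps would break the argument. One must also confirm that the $\geq$ relation (in the paper's sense) established for the iterates is preserved in the limit, which the monotone-convergence reasoning supplies.
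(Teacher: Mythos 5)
Your proposal is correct and follows essentially the same route as the paper: rewrite the uniform power scale-down as an inflation of the noise term $\sigma^2$ to $\alpha\sigma^2$, conclude that the first fixed-point iterate seeded at $\bm{x}$ satisfies $\bm{x}^{(1)}\geq\bm{x}^{(0)}=\bm{x}$, and invoke Proposition~\ref{lma:increasing} together with convergence of the SIF iteration to carry the inequality to the fixed point $\bm{x'}$. Your write-up is somewhat more explicit than the paper's (it spells out the cancellation of $1/\alpha$ and the monotone dependence of the load on the SINR), but no new idea is involved.
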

\begin{proof}
We fix $\bm{\kappa}$ in \textit{MinE} in the proof. 
Let $\bm{x}^{(1)}=\bm{f}(\bm{h}(\bm{p'},\bm{x}^{(0)}))$ with $\bm{x}^{(0)}=\bm{x}$. According to Eq.~(\ref{eq:sinr}), we can verify that $\bm{f}(\bm{h}(\bm{p}/\alpha,\bm{x}))=\bm{f}(\bm{h}(\sigma^2\alpha,\bm{x},\bm{p}))$. Because $\bm{f}(\bm{h}(\sigma^2\alpha,\bm{x},\bm{p}))$ is increasing in $\sigma^2$, we have $\bm{x}^{(1)}\geq\bm{x}^{(0)}=\bm{x}$. By Proposition~\ref{lma:increasing}, we have $\bm{x'}\geq\bm{x}$. Hence the conclusion.
\end{proof}

\begin{lemma}
$\bm{x'}<\alpha\bm{x}$.
\label{lma:alpha_x}
\end{lemma}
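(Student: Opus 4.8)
The plan is to recycle the noise--power equivalence already exploited in the proof of Lemma~\ref{lma:x'>=x}: scaling every $p_i$ down by $\alpha$ is identical, in the SINR expression~(\ref{eq:sinr}), to scaling the noise term $\sigma^2$ up by $\alpha$, so that $\bm{f}(\bm{h}(\bm{p}/\alpha,\bm{x}))=\bm{f}(\bm{h}(\alpha\sigma^2,\bm{x},\bm{p}))$. Consequently $\bm{x'}$, the fixed point attached to $\bm{p'}=\bm{p}/\alpha$ by Definition~\ref{def:p_x}, is also the fixed point of the SIF $\bm{g}(\bm{z})\coloneqq\bm{f}(\bm{h}(\alpha\sigma^2,\bm{z}))$ obtained from the original power with inflated noise. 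I would therefore recast the claim $\bm{x'}<\alpha\bm{x}$ entirely as the task of locating the fixed point of $\bm{g}$ and bounding it above by $\alpha\bm{x}$.

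First I would invoke Proposition~\ref{lma:sigma}, which states that scalability holds for $\bm{f}(\bm{h}(\sigma^2,\bm{x}))$ in the paired variable $[\sigma^2,\bm{x}]$. Applied with the scaling constant $\alpha>1$ at the point $[\sigma^2,\bm{x}]$, this yields the strict componentwise inequality $\alpha\,\bm{f}(\bm{h}(\sigma^2,\bm{x}))>\bm{f}(\bm{h}(\alpha\sigma^2,\alpha\bm{x}))$. Since $\bm{x}=\bm{f}(\bm{h}(\sigma^2,\bm{x}))$ is exactly the fixed point attached to $\bm{p}$, the left-hand side equals $\alpha\bm{x}$, so I obtain $\alpha\bm{x}>\bm{g}(\alpha\bm{x})$; that is, $\alpha\bm{x}$ strictly dominates its own image under $\bm{g}$.

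Next I would start the fixed-point iteration of $\bm{g}$ from $\bm{z}^{(0)}=\alpha\bm{x}$. The step above gives $\bm{z}^{(1)}=\bm{g}(\bm{z}^{(0)})<\bm{z}^{(0)}$, hence in particular $\bm{z}^{(1)}\leq\bm{z}^{(0)}$ in the sense of the paper's notation, so Proposition~\ref{lma:increasing} forces the whole sequence $\bm{z}^{(0)},\bm{z}^{(1)},\dots$ to be monotonically decreasing in every component. Because $\bm{g}$ is an SIF (the inflated noise only changes a positive constant, leaving scalability and monotonicity intact), the iteration converges to its unique fixed point, namely $\bm{x'}$. Therefore $\bm{x'}\leq\bm{z}^{(1)}<\bm{z}^{(0)}=\alpha\bm{x}$, which yields the desired strict bound $\bm{x'}<\alpha\bm{x}$.

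The one point that needs care --- and the step I expect to be the main obstacle --- is preserving strictness. Proposition~\ref{lma:sigma} supplies a strict componentwise inequality only at the first iterate $\bm{z}^{(1)}<\alpha\bm{x}$, whereas the limit of a monotone sequence is a priori only $\leq$ its first term. I would keep the strict gap alive by chaining $\bm{x'}\leq\bm{z}^{(1)}$ with the componentwise strict $\bm{z}^{(1)}<\alpha\bm{x}$, so that $x'_i\leq z^{(1)}_i<\alpha x_i$ for every $i$, rather than comparing $\bm{x'}$ with $\alpha\bm{x}$ directly at the limit. I would also verify that the scalability of Proposition~\ref{lma:sigma} in the \emph{joint} variable $[\sigma^2,\bm{x}]$ is precisely what licenses scaling $\sigma^2$ and $\bm{x}$ by the same $\alpha$ simultaneously, since that simultaneous scaling is exactly what the noise--power equivalence feeds into the argument.
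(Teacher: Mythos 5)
Your proof is correct, and it rests on the same two pillars as the paper's own argument: the power--noise equivalence $\bm{f}(\bm{h}(\bm{p}/\alpha,\bm{x}))=\bm{f}(\bm{h}(\alpha\sigma^2,\bm{x},\bm{p}))$ and the joint scalability of Proposition~\ref{lma:sigma}. The difference lies in how the fixed-point iteration is arranged. The paper starts at $\bm{x}^{(0)}=\bm{x}$, \emph{below} the new fixed point, and must carry the invariant $\bm{x}^{(k)}<\alpha\bm{x}$ through every step by induction, combining scalability at $[\sigma^2,\bm{x}]$ with monotonicity in the load argument at each stage; its write-up of the inductive step even asserts $\bm{x}^{(k+1)}\leq\bm{x}^{(k)}$, which contradicts Lemma~\ref{lma:x'>=x} (the sequence started at $\bm{x}$ is increasing) and is not what the induction actually requires. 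You instead start at $\bm{z}^{(0)}=\alpha\bm{x}$, \emph{above} the fixed point: one application of scalability at the point $[\sigma^2,\bm{x}]$ gives $\bm{g}(\alpha\bm{x})<\alpha\bm{x}$ for your map $\bm{g}(\bm{z})=\bm{f}(\bm{h}(\alpha\sigma^2,\bm{z}))$, Proposition~\ref{lma:increasing} then forces the whole sequence to decrease to the unique fixed point $\bm{x'}$, and strictness survives the limit because $\bm{x'}\leq\bm{z}^{(1)}<\alpha\bm{x}$ componentwise. Your arrangement buys a shorter argument with no induction and a cleaner handling of the strict inequality; it is the same underlying idea, but better organized than the paper's version.
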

\begin{proof}
We fix $\bm{\kappa}$ in \textit{MinE} in the proof. 
By Lemma~\ref{lma:sigma}, $\bm{f}(\bm{h}(\sigma^2,\bm{x}))$ is concave in $\sigma^2$, and thus the scalability holds for $\sigma^2$. 
Denote by $\bm{x}^{(1)}$ the first iteration round, with $\bm{x}^{(1)}=\bm{f}(\bm{h}(\bm{p}/\alpha,\bm{x}^{(0)}))$ and $\bm{x}^{(0)}=\bm{x}$.
According to Lemma~\ref{lma:sigma}, we have $\bm{x}^{(1)}<\alpha\bm{x}$.
Then the remaining proof can be established by mathematical induction. Suppose for some $k\geq 1$, we have 
$
\bm{x}^{(k)}<\alpha\bm{x}	
$. 
By the monotonicity of $\bm{f}(\bm{h}(\alpha\sigma^2,\bm{x},\bm{p}))$ in $\bm{x}$, and Lemma~\ref{lma:Ax}, we have 
$\bm{x}^{(k+1)}\leq\bm{x}^{(k)}<\alpha\bm{x}$. 
Because we have $\bm{x}^{(k)}<\alpha\bm{x}$ for $k=1$, we can conclude that $\bm{x}^{(k)}<\alpha\bm{x}$ holds for any $k\geq 2$, until the convergence $\bm{x'}=\bm{f(\bm{h}(\bm{p'},\bm{x'}))}$ is reached. Hence the conclusion.
\end{proof}

\begin{theorem}
$\bm{p'}^{\mathsf{T}}\bm{x'}\leq\bm{p}^{\mathsf{T}}\bm{x}$.
\label{thm:pre_opt}
\end{theorem}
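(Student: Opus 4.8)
The plan is to reduce the statement directly to the load-growth bound established in Lemma~\ref{lma:alpha_x}. The key observation is that scaling the power down by $\alpha$ has two competing effects on the objective $\bm{p}^{\mathsf{T}}\bm{x}$: it divides the power factor by $\alpha$, while simultaneously increasing the load (by Lemma~\ref{lma:x'>=x}). The theorem will follow once we check that the load cannot grow by as much as the factor $\alpha$ that was saved on the power side, and Lemma~\ref{lma:alpha_x} is precisely the quantitative statement that closes this gap.

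First I would rewrite the scaled objective using $\bm{p'}=\bm{p}/\alpha$ from Definition~\ref{def:p_x}, giving $\bm{p'}^{\mathsf{T}}\bm{x'}=\frac{1}{\alpha}\sum_{i\in\mathcal{I}}p_i x'_i$. Next I would invoke Lemma~\ref{lma:alpha_x}, which supplies the componentwise bound $x'_i<\alpha x_i$ for every $i\in\mathcal{I}$. Multiplying each such inequality by the nonnegative weight $p_i$ and summing over all cells yields $\sum_{i\in\mathcal{I}}p_i x'_i<\alpha\sum_{i\in\mathcal{I}}p_i x_i=\alpha\,\bm{p}^{\mathsf{T}}\bm{x}$. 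Substituting this back into the expression for $\bm{p'}^{\mathsf{T}}\bm{x'}$, the factor $\alpha$ cancels and I conclude $\bm{p'}^{\mathsf{T}}\bm{x'}<\bm{p}^{\mathsf{T}}\bm{x}$, which in particular gives the claimed $\bm{p'}^{\mathsf{T}}\bm{x'}\leq\bm{p}^{\mathsf{T}}\bm{x}$.

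There is essentially no hard step here: the theorem is an almost immediate corollary of Lemma~\ref{lma:alpha_x}, with all of the substantive work (the scalability argument via Proposition~\ref{lma:sigma} and the fixed-point induction) already discharged in that lemma. The only points that warrant minor care are that the load bound holds strictly and componentwise, so that it survives weighting by $\bm{p}\geq\bm{0}$ and summation, and that $\alpha>1$ makes the power scaling a genuine reduction. I would also remark for completeness that the inequality is in fact strict whenever at least one cell transmits with positive power; this is the reason the monotone power scaling in the proposed algorithm never degrades the energy objective.
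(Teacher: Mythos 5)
Your proposal is correct and follows essentially the same route as the paper: both invoke Lemma~\ref{lma:alpha_x} to bound $\bm{x'}<\alpha\bm{x}$ and then cancel the factor $\alpha$ against the power scaling, you merely spelling out the componentwise weighting by $p_i\geq 0$ that the paper leaves implicit. Your closing remark that strictness requires at least one positive power is a fair (minor) refinement of the paper's unqualified strict inequality.
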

\begin{proof}
According to Lemma~\ref{lma:alpha_x}, we have $\bm{x'}<\alpha\bm{x}$. So
\begin{equation}
\bm{p'}^{\mathsf{T}}\bm{x'}=\frac{\bm{p}^{\mathsf{T}}}{\alpha}\bm{x'}<\frac{\bm{p}^{\mathsf{T}}}{\alpha}\cdot\alpha\bm{x}=\bm{p}^{\mathsf{T}}\bm{x}\textnormal{,}
\end{equation}
hence the conclusion.
\end{proof}

Based on Lemma~\ref{lma:x'>=x} and Theorem~\ref{thm:pre_opt}, we design a bisection search based algorithm, namely, POwer scaLed dOwn, (\textit{POLO}), to compute a power allocation that improves the energy performance. \textit{POLO} is shown in Algorithm~\ref{alg:POLO}. The input of \textit{POLO} is the power $\bm{p}$, the load $\bm{x}$ and the association $\bm{\kappa}$. The output is the optimized power allocation $\bm{p'}$ and its corresponding load vector $\bm{x'}$. Let us see how \textit{POLO} works. The basic idea of \textit{POLO} is to find a scaling constant $\beta~(0<\beta<1)$, that makes the new power $\bm{p'}$ to be less than the original power $\bm{p}$, while not violating the maximum load constraint $\bm{x}\leq\bm{1}$. By Lemma~\ref{lma:x'>=x}, we know that the load will increase from $\bm{x}$ to $\bm{x'}$, when we reduce the power from $\bm{p}$ to $\bm{p'}$. On the other hand, the total energy consumption will decrease with $\bm{p'}$ and $\bm{x'}$, according to Theorem~\ref{thm:pre_opt}. As shown in Line 4, $\bm{x}$ is updated to $\bm{x'}$ in each iteration round. When the norm of $\bm{p'}-\bm{p''}$ is no more than a given small value $\epsilon$, it means that $\beta$ is (in respect of $\epsilon$) at convergence to the maximum value that makes $\bm{x'}$ satisfying the full load constraint (\ref{eq:p1}e). In this case, \textit{POLO} ends and returns the new power allocation $\bm{p'}$ and the corresponding load $\bm{x'}$. 

\begin{algorithm}[t]
\label{alg:POLO}
\caption{Power Allocation (\textit{POLO})}
\KwIn{$\bm{p}$, $\bm{x}$, $\bm{\kappa}$, $\epsilon$}
\KwOut{$\langle \bm{p'},\bm{x'}\rangle$}
$\check{\beta}\leftarrow 0$, $\hat{\beta}\leftarrow 1$, $\bm{p'}\leftarrow\bm{p}+\bm{\epsilon}$, $\bm{p''}\leftarrow\bm{p}$\;
\While{$||\bm{p'}-\bm{p''}||>\epsilon$}
{
	$\beta\leftarrow(\check{\beta}+\hat{\beta})/2$\;
	$\bm{p''}\leftarrow\bm{p'}$\;
	$\bm{p'}\leftarrow\beta\bm{p}$\;
	$\bm{x'}\leftarrow \textnormal{\textit{Fix}}\{\bm{f}(\bm{h}(\bm{x},\bm{p'},\bm{\kappa}),\bm{\kappa})\}$\;
	$\bm{x}\leftarrow\bm{x'}$\;
	\eIf{$\max_{i\in\mathcal{I}}x'_i>1$}{
		$\check{\beta}\leftarrow\beta$\;
	}{
		$\hat{\beta}\leftarrow\beta$\;
	}
}
\end{algorithm}

\subsection{Analysis on Association Optimization}
With the power allocation algorithm \textit{POLO}, the total energy decreases with the increase of the network load $\bm{x}$. If there exists a cell $i$ with $x_i=1$, we cannot find a scaling constant $0<\beta<1$ to further reduce the power. In this subsection, we propose an algorithm to optimize the association that reduces the network load via JT. We  use $c$ and $u$ to denote any cell and any UE in the network, respectively, to avoid conflicting with the index $i$ and $j$. Before we change the association, the load and SINR functions are denoted by $f_c(\bm{x})$ and $h_u(\bm{\gamma})$ for any $c\in\mathcal{I}$ and $u\in\mathcal{J}$, respectively.
 
Consider the case of adding a link from a cell $c$ to a UE $u$. Suppose UE $u$ is currently served by the cell(s) in $\mathcal{I}_u$ and there is some cell $c\notin\mathcal{I}_u$. Suppose the set of UEs served by cell $c$ is $\mathcal{J}_c$. The set of UE $u$'s serving cells expands from $\mathcal{I}_u$ to $\mathcal{I}_u\cup\{c\}$. Then UE $u$ does not receive interference from cell $c$ anymore, so the set of cells generating interference to $u$ contracts from $\mathcal{I}\backslash\mathcal{I}_j$ to $\mathcal{I}\backslash(\mathcal{I}_j\cup\{c\})$. The SINR function of UE $u$ after making $c$ to serve $u$, is denoted by $h_u^{+}$, shown in Eq.~(\ref{eq:adding_vartheta}). Note that for any $j\neq u$, the association between UE $j$ and its serving cells $\mathcal{I}_j$ does not change. That is, we have $h^+_j(\bm{x})=h_j(\bm{x})$, for all $j\neq u$. Denote $\bm{h}^+(\bm{x})=[h^+_1(\bm{x}),h^+_2(\bm{x}),\ldots,h^+_m(\bm{x})]$.

\begin{equation}
{h}^{+}_u(\bm{x})=
\frac{\sum_{i\in \mathcal{I}_u\bigcup\{c\}}p_ig_{iu}}{\sum_{k\in \mathcal{I}\backslash(\mathcal{I}_j\cup\{c\})}p_kg_{ku}x_{ku}+\sigma^2}
\label{eq:adding_vartheta}
\end{equation}

For cell $c$, the set of its serving UE is expanded from $\mathcal{J}_c$ to $\mathcal{J}_c\cup\{u\}$. We formulate the load function of cell $c$, represented by notation $f_c^{+}$, in Eq.~(\ref{eq:adding_varphi}). Note that for any $i\neq c$, the association between cell $i$ and its served UEs $\mathcal{J}_i$ does not change. In other words, we have $f^+_i(\bm{x})=f_i(\bm{x})$, for all $i\neq c$. Denote $\bm{f}^+(\bm{\gamma})=[f^+_1(\bm{\gamma}),f^+_2(\bm{\gamma}),\ldots,f^+_n(\bm{\gamma})]$. 

\begin{equation}
{f}^{+}_c(\bm{\gamma})=\sum_{j\in\mathcal{J}_c\bigcup\{u\}}\frac{r_j}{MB\log_2\left(1+\gamma_u\right)}
\label{eq:adding_varphi}
\end{equation}

We show the following theorem, given as a sufficient condition to judge if adding the downlink from $c$ to $u$ can reduce the network load $\bm{x}$. The notation $\circ$ used below denotes the function compound relationship \cite{Anonymous:Dv6oz3oX}. That is, $\bm{f}\circ\bm{h}(\cdot)$ is equivalent to $\bm{f}(\bm{h}(\cdot))$. 

\begin{theorem}
Suppose $\bm{\widetilde{x}}=\bm{{f}}\circ\bm{{h}}(\bm{\widetilde{x}})$ and $\bm{x}=\bm{{f}^{+}}\circ\bm{{h}^{+}}(\bm{x})$. Then $\bm{x}\leq\bm{\widetilde{x}}$ if $\exists k\geq 1$ in the iteration $\bm{x}^{(k)}=\bm{{f}}\circ\bm{{h}}^{+}(\bm{x}^{(k-1)})$ such that ${f}^{+}_c\circ\bm{{h}}^{+}(\bm{x}^{(k)})\leq x^{(k)}_c$, where $\bm{x}^{(0)}=\bm{\widetilde{x}}$.
\label{thm:adding_sufficient}
\end{theorem}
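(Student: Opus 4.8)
The plan is to invoke Proposition~\ref{lma:increasing} twice, once for the map $\bm{f}\circ\bm{h}^{+}$ that drives the stated iteration and once for the map $\bm{f}^{+}\circ\bm{h}^{+}$ whose fixed point is $\bm{x}$, and to let the hypothesis perform the hand-off from the first map to the second. Everything rests on three elementary comparisons. First, adding the link $c\to u$ turns $c$ from an interferer of $u$ into a server of $u$, so for every $\bm{x}$ one has $h^{+}_u(\bm{x})>h_u(\bm{x})$ while $h^{+}_j(\bm{x})=h_j(\bm{x})$ for $j\neq u$; hence $\bm{h}^{+}(\bm{x})\geq\bm{h}(\bm{x})$. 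Second, each component of $\bm{f}$ is decreasing in $\bm{\gamma}$, since $d_j/[MB\log_2(1+\gamma_j)]$ decreases as $\gamma_j$ grows; therefore $\bm{f}\circ\bm{h}^{+}(\bm{x})\leq\bm{f}\circ\bm{h}(\bm{x})$ for every $\bm{x}$. Third, $\bm{f}^{+}$ and $\bm{f}$ differ only in the cell-$c$ component, i.e. $f^{+}_i=f_i$ for all $i\neq c$, so $\bm{f}^{+}\circ\bm{h}^{+}(\bm{x})$ and $\bm{f}\circ\bm{h}^{+}(\bm{x})$ agree in every coordinate except $c$. I also use that both $\bm{f}\circ\bm{h}^{+}$ and $\bm{f}^{+}\circ\bm{h}^{+}$ are SIFs, which follows by the same argument that establishes the SIF property of $\bm{f}\circ\bm{h}$ in \cite{You:2015wva}.

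First I would show that the stated iterates decrease monotonically. Starting from $\bm{x}^{(0)}=\bm{\widetilde{x}}=\bm{f}\circ\bm{h}(\bm{\widetilde{x}})$, the first two comparisons give
\[
\bm{x}^{(1)}=\bm{f}\circ\bm{h}^{+}(\bm{\widetilde{x}})\leq\bm{f}\circ\bm{h}(\bm{\widetilde{x}})=\bm{\widetilde{x}}=\bm{x}^{(0)}.
\]
Since $\bm{f}\circ\bm{h}^{+}$ is an SIF, its monotonicity (Definition~\ref{def:SIF}) propagates this one-step decrease, and Proposition~\ref{lma:increasing} makes the whole sequence $\bm{x}^{(0)},\bm{x}^{(1)},\ldots$ monotonically decreasing in every component; in particular $x^{(k+1)}_i\leq x^{(k)}_i$ for all $i$ and all $k$.

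Next I would use the hypothesis at the guaranteed index $k$ to switch to the correct map. For $i\neq c$, the third comparison together with the monotone decrease gives
\[
f^{+}_i\circ\bm{h}^{+}(\bm{x}^{(k)})=f_i\circ\bm{h}^{+}(\bm{x}^{(k)})=x^{(k+1)}_i\leq x^{(k)}_i,
\]
while for $i=c$ the hypothesis reads $f^{+}_c\circ\bm{h}^{+}(\bm{x}^{(k)})\leq x^{(k)}_c$. Assembling the coordinates yields the vector inequality $\bm{f}^{+}\circ\bm{h}^{+}(\bm{x}^{(k)})\leq\bm{x}^{(k)}$. I then restart the fixed-point iteration of the SIF $\bm{f}^{+}\circ\bm{h}^{+}$ from $\bm{x}^{(k)}$: this inequality is exactly the trigger of Proposition~\ref{lma:increasing}, so the iteration decreases monotonically and, by uniqueness of the SIF fixed point \cite{Yates:1995eh}, converges down to $\bm{x}=\bm{f}^{+}\circ\bm{h}^{+}(\bm{x})$. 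Hence $\bm{x}\leq\bm{x}^{(k)}$, and chaining with the first sequence gives $\bm{x}\leq\bm{x}^{(k)}\leq\bm{x}^{(0)}=\bm{\widetilde{x}}$, which is the claim.

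The main obstacle is this hand-off between the two maps, and it is where the hypothesis is indispensable. The stated iteration is governed by $\bm{f}\circ\bm{h}^{+}$, whereas the target $\bm{x}$ is the fixed point of $\bm{f}^{+}\circ\bm{h}^{+}$; because the two maps coincide off the cell-$c$ coordinate, the monotone-decrease argument already controls every coordinate except $c$ for free. The single scalar hypothesis $f^{+}_c\circ\bm{h}^{+}(\bm{x}^{(k)})\leq x^{(k)}_c$ supplies precisely the missing coordinate, upgrading the coordinatewise information into the vector inequality that lets Proposition~\ref{lma:increasing} act on the correct map. The only remaining points demanding care are verifying that both compositions are genuine SIFs (so that Proposition~\ref{lma:increasing} and fixed-point uniqueness apply) and keeping the two bookkeeping identities $h^{+}_j=h_j$ for $j\neq u$ and $f^{+}_i=f_i$ for $i\neq c$ straight; once these are in place, the monotonicity and scalability of Definition~\ref{def:SIF} do the rest.
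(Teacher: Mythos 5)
Your proposal is correct and follows essentially the same route as the paper's proof: establish the one-step decrease $\bm{x}^{(1)}\leq\bm{x}^{(0)}$ from $\bm{h}^{+}\geq\bm{h}$, propagate it with Proposition~\ref{lma:increasing} to get $\bm{x}^{(k)}\leq\bm{x}^{(0)}$, use the hypothesis plus $f^{+}_i=f_i$ for $i\neq c$ to obtain $\bm{f}^{+}\circ\bm{h}^{+}(\bm{x}^{(k)})\leq\bm{x}^{(k)}$, and then descend to the fixed point $\bm{x}$. Your bookkeeping for the $i\neq c$ coordinates (writing $f_i\circ\bm{h}^{+}(\bm{x}^{(k)})=x^{(k+1)}_i\leq x^{(k)}_i$ directly) is in fact slightly cleaner than the paper's, but the argument is the same.
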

\begin{proof} 
We construct the following steps.
In the first $k$ iteration rounds (iteration $t\in[1,k]$),
let $\bm{x}^{(t)}=\bm{{f}}\circ\bm{{h}}^{+}(\bm{x}^{(t-1)})$. By Eq.~(\ref{eq:adding_vartheta}), we have
$
\bm{x}^{(1)}=\bm{{f}}\circ\bm{{h}}^{+}(\bm{x}^{(0)})\leq\bm{{f}}\circ\bm{{h}}(\bm{x}^{(0)})=\bm{x}^{(0)}
$.
Thus by Lemma~\ref{lma:increasing}, $\bm{x}^{(k)}\leq\bm{x}^{(k-1)}\leq\cdots\leq\bm{x}^{(0)}$ holds.
For the remaining iteration rounds (iteration $t>k$), let $\bm{x}^{(k+1)}=\bm{{f}}^{+}\circ\bm{{h}}^{+}(\bm{x}^{(k)})$. According to the condition in Theorem~\ref{thm:adding_sufficient}, ${f}^{+}_c\circ\bm{{h}}^{+}(\bm{x}^{(k)})\leq x^{(k)}_c$ holds. For any $i\neq c$, by Eq.~(\ref{eq:adding_varphi}), we have 
${f}^{+}_i\circ\bm{{h}}^{+}(\bm{x}^{(k)})={f}_i\circ\bm{{h}}^{+}(\bm{x}^{(k)})$. 
Since $\bm{x}^{(k-1)}\geq\bm{x}^{(k)}$, we have
$
{f}_i\circ\bm{{h}}^{+}(\bm{x}^{(k)})\leq{f}_i\circ\bm{{h}}^{+}(\bm{x}^{(k-1)})=\bm{x}^{(k)}
$
holds. Then we have 
$
\bm{x}^{(k+1)}=\bm{{f}}^{+}\circ\bm{{h}}^{+}(\bm{x}^{(k)})\leq\bm{x}^{(k)}
$. 
By Lemma~\ref{lma:increasing}, at convergence we have
$
\bm{x}=\bm{{f}}^{+}\circ\bm{{h}}^{+}(\bm{x})\leq\cdots\leq\bm{x}^{(k+1)}\leq\bm{x}^{(k)}
\label{eq:x_converge}	
$. 
Combined with $\bm{x}^{(k)}\leq\bm{x}^{(0)}$, we get the conclusion.
\end{proof}


\begin{algorithm}[t]
\label{alg:AOLO}
\caption{Association Allocation (\textit{AOLO})}
\KwIn{$\bm{p}$, $\bm{x}$, $\bm{\kappa}$, $\tau$}
\KwOut{$\langle\bm{\kappa'},\bm{x'}\rangle$}
\For{$\forall \kappa_{ij}=0~\textnormal{in}~\bm{\kappa}$}
{
		$\bm{x}^{(0)}\leftarrow\bm{x}$\;
		\For{$k\leftarrow1\textnormal{ \textbf{to} }\tau$}
		{
			\If{$\kappa_{ij}=0$}
			{
				$\kappa'_{ij}\leftarrow 1$\;
				$\bm{x}^{(k)}\leftarrow\bm{f}\left(\bm{h}(\bm{x}^{(k-1)},\bm{p},\bm{\kappa'}),\bm{\kappa}\right)$\;
				\If{$x_i^{(k)}\leq f_i\left(\bm{h}(\bm{x}^{(k)},\bm{p},\bm{\kappa'}),\bm{\kappa'}\right)$}
				{
					$\kappa_{ij}\leftarrow\kappa'_{ij}$\;
						$\bm{x'}\leftarrow \textnormal{\textit{Fix}}\{\bm{f}(\bm{h}(\bm{x},\bm{p},\bm{\kappa'}),\bm{\kappa'})\}$\;
				}
			}
		}
}
\end{algorithm}

\begin{algorithm}[b]
\label{alg:PALO}
\caption{Power-Association Allocation (\textit{PALO})}
\KwIn{$\bm{p}$, $\bm{x}$, $\bm{\kappa}$, $\epsilon$, $\tau$}
\KwOut{$\bm{p'}$, $\bm{\kappa'}$}
\Repeat{$\bm{\kappa}=\bm{\kappa'}$}
{
$\langle\bm{p'},\bm{x'}\rangle\leftarrow$\textit{ POLO}$(\bm{p},\bm{x},\bm{\kappa},\epsilon)$\;
$\langle\bm{\kappa'},\bm{x''}\rangle\leftarrow$\textit{ AOLO}$(\bm{p'},\bm{x'},\bm{\kappa},\tau)$\;
$\langle\bm{p},\bm{x},\bm{\kappa}\rangle\leftarrow\langle\bm{p'},\bm{x''},\bm{\kappa'}\rangle$\;
}
\end{algorithm}

Based on Theorem~\ref{thm:adding_sufficient}, an algorithm for optimizing AssOciation for reducing the LOad, namely \textit{AOLO}, is proposed. \textit{AOLO} is shown in Algorithm~\ref{alg:AOLO}. The basic idea of \textit{AOLO}, is to check all cell-UE pairs without a downlink, to see if adding a link from the cell to the UE can reduce the network load $\bm{x}$, by the derived sufficient condition in Theorem~\ref{thm:adding_sufficient}. Since the transmit power $\bm{p}$ is not changed in \textit{AOLO}, the energy consumption, $\bm{p}^{\mathsf{T}}\bm{x}$, is reduced. In \textit{AOLO}, the parameter $\tau$ is a given constant, indicating the maximal number of iteration rounds to check if the condition in Theorem~\ref{thm:adding_sufficient} can be satisfied. \textit{AOLO} checks all elements $\kappa_{ij}$ with $\kappa_{ij}=0$ in $\bm{\kappa}$. As shown in Line 9, the fix-point iteration is executed only if the new association $\bm{\kappa'}$ is ascertained to lead to a reduction on the network load $\bm{x}$. \textit{AOLO} ends when no improvement on the load can be found by the sufficient condition in Theorem~\ref{thm:adding_sufficient}. In this case, the new association $\bm{\kappa'}$ and its corresponding network load $\bm{x'}$ are returned.

\section{Algorithm Design and Analysis}

Based on the two algorithms \textit{POLO} and \textit{AOLO}, we propose an algorithm that jointly optimizes the power $\bm{p}$ and the association $\bm{\kappa}$. The proposed algorithm is named \textit{PALO} and shown in Algorithm~\ref{alg:PALO}. The basic idea of \textit{PALO} is to improve the power allocation and association allocation by \textit{POLO} and \textit{AOLO}, iteratively. In every iteration round, the old tuple of power, association, and load, $\langle\bm{p},\bm{\kappa},\bm{x}\rangle$, is updated to the new tuple $\langle\bm{p'},\bm{x'},\bm{\kappa'}\rangle$. In Line 3, we get a new tuple of power and load, $\langle\bm{p'},\bm{x'}\rangle$, with $\bm{p'}<\bm{p}$ and $\bm{p}^{\mathsf{T}}\bm{x}>\bm{p'}^{\mathsf{T}}\bm{x'}$. In Line 4, under the power allocation $\bm{p'}$, we further seek for a new tuple of association and load, $\langle\bm{\kappa'},\bm{x''}\rangle$, with $\bm{x'}\geq\bm{x''}$. Then, in each round of the loop shown in Lines 1--5, we have the inequality $\bm{p}^{\mathsf{T}}\bm{x}>\bm{p'}^{\mathsf{T}}\bm{x'}\geq\bm{p'}^{\mathsf{T}}\bm{x''}$. Therefore, the energy is improved round by round, until we cannot reduce the network load by adding JT link. In this case, \textit{AOLO} cannot find a new association that reduce the load, so we have $\bm{\kappa'}=\bm{\kappa}$. Besides, since the $\beta$ obtained from \textit{POLO} is now the maximal value that does not make the load vector to violate the full load constraint, $\beta$ cannot be updated again. Therefore \textit{PALO} ends. Theorem~\ref{thm:polynomial} shows that \textit{PALO} is in polynomial time.
\begin{theorem}
Suppose that computing the fixed point of $\bm{f}(\bm{h}(\bm{x}))$ is of the complexity $O(K)$, then \textit{PALO} runs in $O(Km^2n^2)$.
\label{thm:polynomial}	
\end{theorem}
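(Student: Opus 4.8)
The plan is to bound the running time of \textit{PALO} (Algorithm~\ref{alg:PALO}) by separately accounting for the number of iterations of its outer \textbf{Repeat} loop and the per-iteration cost contributed by the two subroutines \textit{POLO} (Algorithm~\ref{alg:POLO}) and \textit{AOLO} (Algorithm~\ref{alg:AOLO}), and then multiplying. The central observation I would exploit is a monotone progress measure on the association matrix $\bm{\kappa}\in\{0,1\}^{n\times m}$: \textit{POLO} never alters $\bm{\kappa}$ (it only rescales the power by $\beta$ and recomputes the corresponding load), whereas \textit{AOLO} only flips entries from $0$ to $1$, i.e. it only \emph{adds} JT links and never removes one. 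Hence the number of ones in $\bm{\kappa}$ is nondecreasing across outer iterations, which is the quantity I would use to argue termination.

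Building on this, I would bound the outer loop. The loop terminates exactly when \textit{AOLO}'s output association $\bm{\kappa'}$ equals the input association $\bm{\kappa}$, i.e. when \textit{AOLO} fails to add any new link. Consequently, every outer iteration that does \emph{not} terminate must add at least one link, strictly increasing the number of ones in $\bm{\kappa}$. Since $\bm{\kappa}$ has only $nm$ entries and links are only ever added, at most $nm$ links can be added in total, so the outer \textbf{Repeat} loop executes $O(nm)$ times.

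Next I would bound the two subroutine costs. For \textit{POLO}, the \textbf{While} loop is a bisection on $\beta$, so its iteration count is $O(\log(1/\epsilon))$, which is constant in $n$ and $m$; each such iteration is dominated by one fixed-point computation $\textnormal{\textit{Fix}}\{\cdot\}$ of cost $O(K)$, giving $O(K)$ for \textit{POLO} after absorbing the $\log(1/\epsilon)$ factor into the constant. For \textit{AOLO}, the outer \textbf{For} ranges over the at most $nm$ zero entries of $\bm{\kappa}$, and for each entry the inner loop runs the fixed constant $\tau$ rounds; each round is dominated by a fixed-point evaluation of cost $O(K)$, since a single application of $\bm{f}\circ\bm{h}$ (Line~6) and the occasional $\textnormal{\textit{Fix}}\{\cdot\}$ (Line~9) are both no costlier than the full fixed-point computation. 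Thus \textit{AOLO} costs $O(Knm)$. Combining, each outer iteration of \textit{PALO} costs $O(K)+O(Knm)=O(Knm)$, and there are $O(nm)$ such iterations, which yields the claimed $O(Km^2n^2)$.

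The step I expect to be the main obstacle is the termination/progress argument, namely rigorously establishing the $O(nm)$ bound on the number of outer iterations; this hinges entirely on the monotonicity of \textit{AOLO} in $\bm{\kappa}$ (links are only added, never deleted) together with the fact that a non-adding call is precisely what triggers the stopping condition. A secondary subtlety is confirming that neither the bisection loop of \textit{POLO} nor a single fixed-point iteration inside \textit{AOLO} introduces a hidden factor in $n$ or $m$ beyond the $O(K)$ fixed-point cost, so that no extraneous $\log(1/\epsilon)$ or per-step term inflates the final bound; I would handle this by treating $\epsilon$ and $\tau$ as constants and noting that one evaluation of $\bm{f}\circ\bm{h}$ is dominated by $O(K)$.
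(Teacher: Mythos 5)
Your proposal is correct and follows essentially the same decomposition as the paper: an $O(nm)$ bound on the outer \textbf{Repeat} loop multiplied by the per-iteration cost of \textit{POLO} plus the $O(Kmn)$ cost of \textit{AOLO}. The only differences are minor: you supply an explicit progress/termination argument for the $O(nm)$ outer-loop bound (which the paper merely asserts), and you count the bisection depth of \textit{POLO} as $O(\log(1/\epsilon))$, i.e.\ constant in $n$, whereas the paper notes that the stopping test $\|\bm{p'}-\bm{p''}\|\leq\epsilon$ involves the norm of an $n$-vector and so derives $O(\log_2(\sqrt{n}\,p^{max}/\epsilon))=O(\log_2 n)$ iterations --- but since either count is dominated by the $O(Kmn)$ term of \textit{AOLO}, the final bound $O(Km^2n^2)$ is unaffected.
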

\begin{proof}
In \textit{POLO}, the maximal possible distance between $\bm{p'}$ and $\bm{p''}$ is less than that between $\bm{p}^{max}$ and $\bm{0}$, due to the assumption that we have $\bm{p}>\bm{0}$ for any $\bm{p}$. Consider the worst case, that the initial power is $\bm{p}^{max}$ and the finally optimized power is a very small positive value. Let $p^{max}$ be the largest value in $\bm{p}^{max}$, we have Eq.~(\ref{eq:complexity1}) holds.
\begin{equation}
\sqrt{(p'_1-p''_1)^2+\cdots+(p'_n-p''_n)^2}<\sqrt{n\beta^2\left(p^{max}\right)^2}
\label{eq:complexity1}
\end{equation}
We get $||\bm{p'}-\bm{p''}||<\epsilon$ if and only if $\beta\leq\epsilon/(\sqrt{n}p^{max})$. This can be verified by replace any such $\beta$ in Eq.~(\ref{eq:complexity1}), i.e.
\begin{equation}
||\bm{p'}-\bm{p''}||<\sqrt{n\left(\frac{\epsilon}{\sqrt{n}p^{max}}\right)^2\left(p^{max}\right)^2}=\epsilon
\end{equation}
Thus, to get $\beta\leq\epsilon/(\sqrt{n}p^{max})$ by reducing $\beta$ via the bisection search, the complexity is 
$
O(\log_2\left[\left(\sqrt{n}p^{max})/\epsilon\right]\right)=O(\log_2n)
$. 
So \textit{POLO} runs in $O(K\log_2n)$. 
In \textit{AOLO}, the outer loop repeats at most $m\times n$ round. So \textit{AOLO} is of the complexity $O(Kmn)$. In PALO, note that the loop repeats at most $m\times n$ round. Then the computation complexity of \textit{PALO} is calculated as
\begin{equation}
O(mn)\times\left[O(K\log_2n)+O(Kmn)\right]=O(Km^2n^2)
\end{equation}
Hence the conclusion.
\end{proof}

\section{Performance Evaluation}

\subsection{Simulation Settings}


We show numerical results in this section. In the simulation, there are 7 macro cells (MCs) in total. Two small cells (SCs) are placed around each MC. Thirty UEs are randomly distributed in each hexagon. The network operates at 2 GHz. Each RU is set to 180 kHz bandwidth and the bandwidth for each cell is 4.5 MHz. The noise power spectral density is set to -174 dBm/Hz. The path loss for the MCs follows the standard 3GPP urban macro (UMa) model \cite{Anonymous:GtzCaaVU}. The path loss for the SCs follows the standard 3GPP urban micro (UMi) model of hexagonal deployment. The shadowing coefficients are generated by the log-normal distribution with 6 dB and 3 dB standard deviation \cite{Anonymous:GtzCaaVU}, for MCs and SCs, repectively. The maximum transmit power levels for MCs and SCs are set to 200 mW and 50 mW per RU, respectively. Each UE is initially connected to the cell (an MC or SC) with the best received signal power, i.e., the network is initialized without JT. We run the simulation on 15 groups of data. The final results shown in this section are averaged from them. 

\subsection{Numerical Results}

\begin{figure}[t]
\centering
  \includegraphics[width=0.927\linewidth]{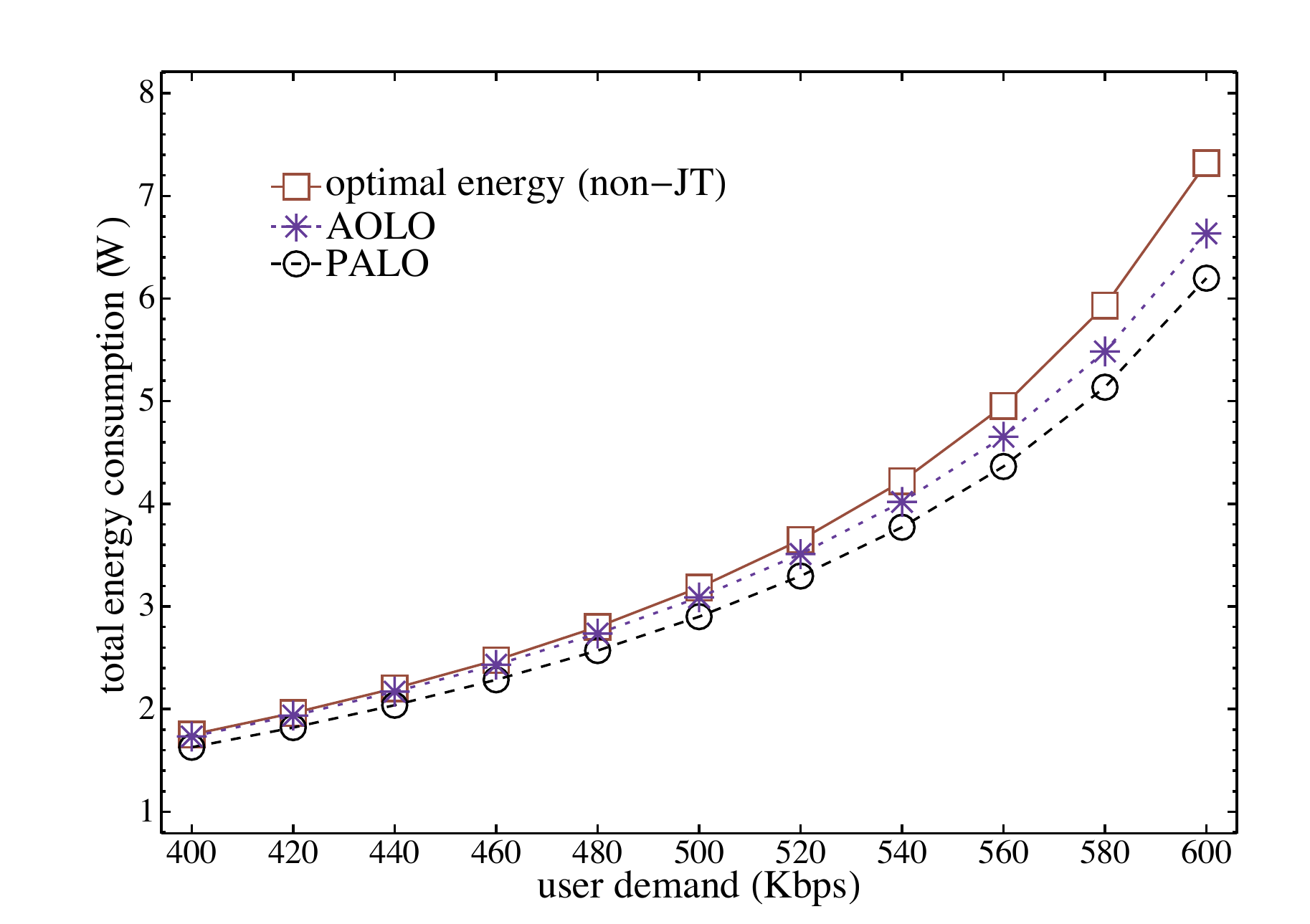}
  \vskip -5pt
  \caption{Optimal non-JT vs. Sub-optimal JT}
  \label{fig:plot1}
\end{figure}

\figurename~\ref{fig:plot1} shows the results of total energy consumption among three cases. In the case of optimal energy in non-JT, each UE is served only by one cell (the cell with the best received signal power). Due to the full load optimality in the non-JT case \cite{Ho:2015hw}, each cell in this case is set to full load. The power of the non-JT case is then computed under the full load. In the case AOLO shown in \figurename~\ref{fig:plot1}, only the algorithm \textit{AOLO} is applied. In other words, the transmit power is as same as that in the non-JT case, while the cell load is reduced by making some cells to serve UEs via JT. In the case PALO, the algorithm \textit{PALO} is applied on the non-JT case. On average, the optimal energy in non-JT can be reduced by $4.01\%$ via optimizing the association by \textit{AOLO}. By iteratively using \textit{AOLO} and \textit{POLO}, i.e. \textit{PALO}, the energy can be further reduced by $6.71\%$. Compared with the initial non-JT case, the energy is reduced by $9.82\%$ by \textit{PALO}. The improvement becomes larger with the increase of the user demand.

In \figurename\ \ref{fig:plot2}, we investigate the performance of the power allocation algorithm \textit{POLO} under JT. The cell-UE association for each user demand is computed by \textit{PALO}. In CASE 1, we set the transmit power for MCs and SCs per RU to 160 mW and 40 mW, respectively. In CASE 2, we respectively set 120 mW and 30 mW for MCs and SCs. We then apply \textit{POLO} on both of the two cases. As we can see, the power allocation algorithm \textit{POLO} significantly reduces the energy consumption for both CASE 1 and CASE 2. Numerically, \textit{POLO} reduces the energy by $54.90\%$ and $45.28\%$ for CASE 1 and CASE 2, respectively. One can observe that the lower the user demand is, the more the energy consumption is reduced. This is because, when the user demand is low, the maximal cell load level is relatively low. Thus, the power can be reduced more with the cell being not overloaded, thus leading to more reduction on the energy. When the demand is 540 Kbps, there exists one cell at full load in the simulation, for both CASE 1 and CASE 2. The power cannot be reduced by \textit{POLO} in this situation, otherwise there will be at least one cell $i$ with $x_i>1$, violating the load constraint. For other demands less than 540 Kbps, there is visually no difference between the performance of \text{POLO} applied on CASE 1 and CASE 2. One can observe that \textit{POLO} is effective in energy saving in the JT case, when the cell-UE association is fixed.

\begin{figure}[t]
\centering
  \includegraphics[width=0.927\linewidth]{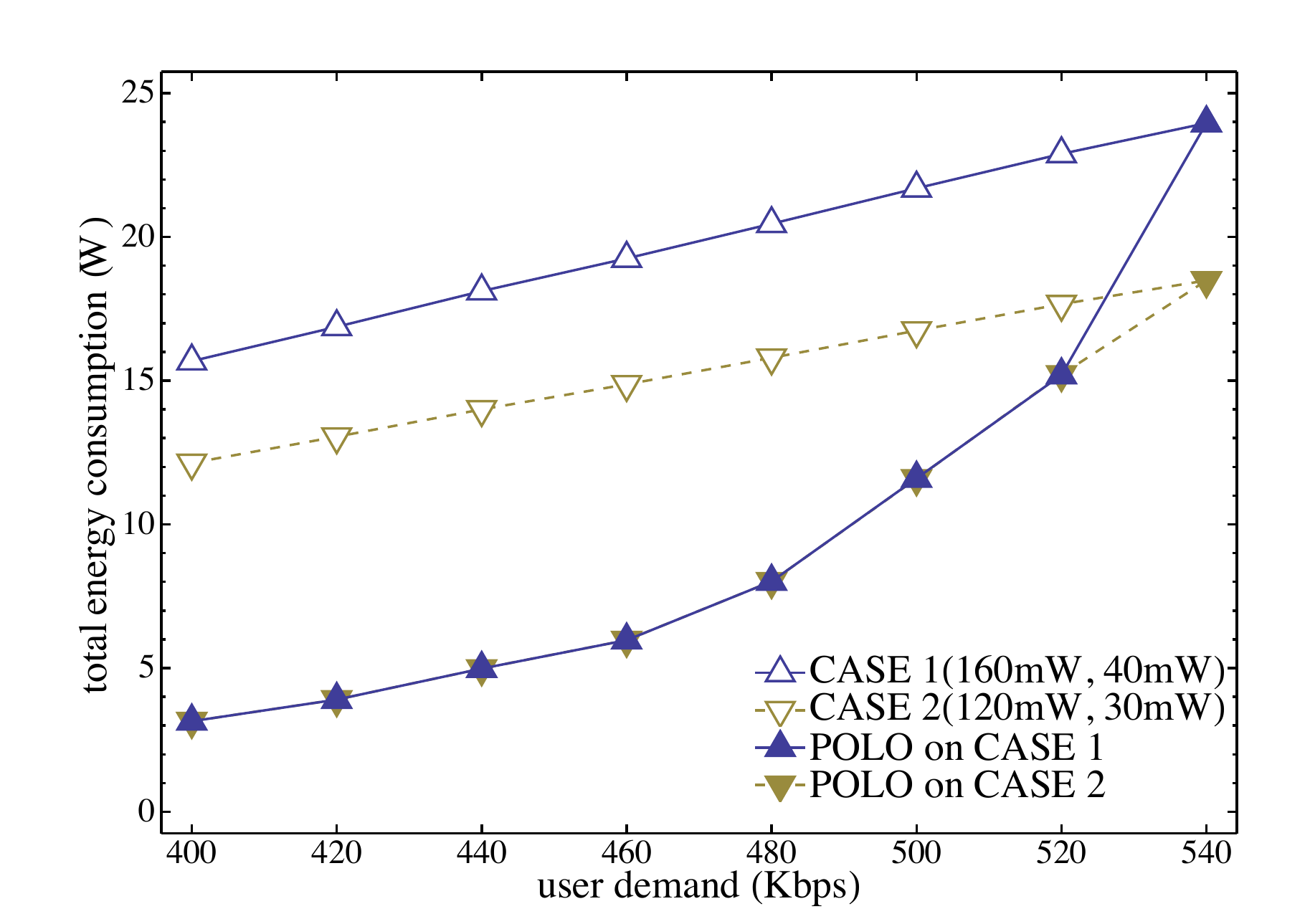}
  \caption{Performance of POLO under JT}
  \label{fig:plot2}
\end{figure}

\section{Conclusion}

We have investigated the energy minimization problem under the load-coupling model in JT scenarios. 
We remark that the conclusion of full load optimality in the previous work \cite{Ho:2015hw,Anonymous:SLcg4Bln} does not hold with JT in general. Thus, the scheme of optimal power and load setting in non-JT does not apply in the JT cases. For energy saving in JT, we consider to jointly optimize the power allocation and cell-UE association. Two algorithms are then proposed. We show theoretically that these two algorithms naturally combined with each other for energy saving.
We numerically showed that the proposed algorithms improve the energy performance in comparison with the optimal energy setting in non-JT. In JT, the proposed power allocation algorithm significantly improves the performance of total energy consumption, compared to those with the setting of fixed power levels. 
\section*{Acknowledgements}
This work has been supported by the Swedish Research Council and the
Link\"{o}ping-Lund Excellence Center in Information Technology (ELLIIT), Sweden,
and the European Union Marie Curie project MESH-WISE (FP7-PEOPLE-2012-IAPP\@: 324515), DECADE (H2020-MSCA-2014-RISE: 645705), and WINDOW (FP7-MSCA-2012-RISE: 318992). The work of the second author has been partially supported by the China
Scholarship Council (CSC). The work of D. Yuan has been carried out within
European FP7 Marie Curie IOF project 329313.

\end{document}